\documentclass[aps,prl,reprint,superscriptaddress,nofootinbib]{revtex4-2}

\usepackage[T1]{fontenc}
\usepackage[utf8]{inputenc}
\usepackage{amsmath,amssymb,amsthm}
\usepackage{bm}
\usepackage[colorlinks=true,citecolor=blue,linkcolor=blue,urlcolor=blue]{hyperref}

\allowdisplaybreaks

\newtheorem{theorem}{Theorem}

\newtheorem{definition}{Definition}

\begin{document}

\title{Ontic Dynamical Locality Reduces to Bell Locality}

\author{Ming Yang}
\date{\today}

\begin{abstract}
Bell inequalities exclude a broad class of local hidden-variable explanations of
quantum correlations. A recurring objection is that the usual Bell form is
static, whereas real measuring devices may contain local memory, stochastic
dynamics, and measurement-induced disturbances of their hidden variables. We
formulate this objection as a general transition-kernel model for dynamical
hidden variables. The only locality assumption is imposed at the ontic level:
conditional on the pre-measurement ontic state, the transition kernel and
response function in each wing depend on the local setting but not on the
distant setting or on distant post-measurement variables. Under measurement
independence, all such dynamics can be absorbed into effective local response
functions. The resulting probabilities have exactly the static Bell-local form
and therefore obey the CHSH inequality. The result identifies the available
escape routes for reproducing quantum correlations: violating ontic dynamical
locality, violating measurement independence, or abandoning a classical
hidden-variable ontology. As a consequence, local classical dynamical complexity
cannot by itself spoof Bell-nonlocal statistics in device-independent protocols.
\end{abstract}

\maketitle

\emph{Introduction.---}
Bell's theorem shows that quantum correlations cannot be reproduced by the
standard local hidden-variable model \cite{Bell1964,Bell1971,CHSH1969}. This
conclusion is now not only a foundational statement, but also the basis of
device-independent quantum information, where Bell violation certifies
properties of devices without trusting their microscopic implementation
\cite{Brunner2014,Pironio2010}.

A persistent concern is that the Bell model appears static. Real apparatuses
are dynamical systems: they may contain internal memory, hidden fields,
chaotic variables, and local measurement-induced disturbances. This motivates a
natural question. Can a completely local dynamical process, by reshuffling
hidden variables during the measurement, generate correlations that would be
impossible in a static Bell model? Similar intuitions underlie several proposed
classical or macroscopic analogues of Bell correlations
\cite{Matzkin2008,HessPhilipp2001,HessPhilipp2004,Vervoort2018,Papatryfonos2022}.

The answer is no, provided locality is stated at the correct level. Operational
no-signaling alone is too weak: many nonlocal hidden-variable theories are
no-signaling after averaging. This distinction is familiar from Bell's local
causality, Shimony's discussion of controllable and uncontrollable nonlocality,
and the separation between parameter independence and outcome independence
\cite{Bell1976,Shimony1986,Jarrett1984,Norsen2009}. We therefore avoid the
ambiguous slogan ``local realism'' and formulate the assumptions directly in
terms of transition kernels over ontic states. The theorem below shows that any
measurement-independent dynamical model satisfying ontic dynamical locality is
mathematically reducible to an ordinary static Bell-local model.

\emph{Dynamical hidden-variable model.---}
Let $x$ and $y$ denote the measurement settings chosen in two spacelike
separated wings, and let $a$ and $b$ denote the corresponding outcomes. Before
the measurement interaction, the experiment is described by a complete
pre-measurement ontic state $\Lambda$ distributed according to
$\rho_{\rm pre}(d\Lambda)$. Measurement independence is the condition
\begin{equation}
  \rho_{\rm pre}(d\Lambda|x,y)=\rho_{\rm pre}(d\Lambda).
  \label{eq:mi}
\end{equation}
The state $\Lambda$ may include local variables, shared source variables,
apparatus variables, and any common classical field value available before the
settings are applied.

During measurement, Alice's apparatus may generate an arbitrary local
post-measurement variable $u_A$ according to a stochastic kernel
$K_A(du_A|\Lambda,x)$, and Bob's apparatus may generate $u_B$ according to
$K_B(du_B|\Lambda,y)$. The variables $u_A$ and $u_B$ are deliberately general:
they may contain disturbed local hidden variables, memory registers, local
copies of a shared variable, or locally updated representatives of a background
field. Outcomes are then produced by response functions
$P_A(a|x,\Lambda,u_A)$ and $P_B(b|y,\Lambda,u_B)$.

\begin{definition}[Ontic dynamical locality]
A dynamical hidden-variable model is ontically dynamically local if, conditional
on the pre-measurement ontic state $\Lambda$, its measurement-stage probability
factorizes as
\begin{align}
 &\Pr(a,b,du_A,du_B|x,y,\Lambda) \nonumber\\
 &\quad =
 K_A(du_A|\Lambda,x)K_B(du_B|\Lambda,y) \nonumber\\
 &\qquad \times
 P_A(a|x,\Lambda,u_A)P_B(b|y,\Lambda,u_B).
 \label{eq:odl}
\end{align}
Here the left-hand side is a conditional measure in the post-measurement
variables $u_A,u_B$ for fixed discrete outcomes $a,b$.
Equivalently, neither wing's transition kernel or response function depends on
the distant setting or on the distant post-measurement variable, once
$\Lambda$ is specified.
\end{definition}

This condition is stronger than operational no-signaling. It is the dynamical
version of Bell's local causality: all correlations between the wings must be
screened off by the complete pre-measurement ontic state. It may be motivated
by Einstein's separability intuition, but the mathematical assumption is
Eq.~\eqref{eq:odl}, not the historical phrase itself.

\emph{Equivalence theorem.---}
Combining the preparation and measurement stages gives
\begin{align}
 P(a,b|x,y)
  ={}&\int \rho_{\rm pre}(d\Lambda)
      \int K_A(du_A|\Lambda,x)P_A(a|x,\Lambda,u_A) \nonumber\\
    &\times
      \int K_B(du_B|\Lambda,y)P_B(b|y,\Lambda,u_B).
 \label{eq:joint}
\end{align}

\begin{theorem}[Dynamical-to-static reduction]
Every measurement-independent dynamical hidden-variable model satisfying
ontic dynamical locality is equivalent to a static Bell-local model.
Consequently, for binary outcomes and two settings per party, it satisfies the
CHSH inequality $|S|\leq 2$.
\end{theorem}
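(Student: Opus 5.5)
The plan is to absorb each wing's dynamics into an effective response function and then read off the static Bell form. Starting from Eq.~\eqref{eq:joint}, which already combines measurement independence, Eq.~\eqref{eq:mi}, with the factorization in Eq.~\eqref{eq:odl}, I will define for each $\Lambda$
\begin{align}
 \tilde P_A(a|x,\Lambda) &:= \int K_A(du_A|\Lambda,x)\,P_A(a|x,\Lambda,u_A), \nonumber\\
 \tilde P_B(b|y,\Lambda) &:= \int K_B(du_B|\Lambda,y)\,P_B(b|y,\Lambda,u_B).
\end{align}
I first check that these are legitimate response functions. For fixed $(x,\Lambda)$, the function $\tilde P_A(a|x,\Lambda)$ is nonnegative. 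Summing over $a$ and exchanging the finite sum with the integral, then using $\sum_a P_A=1$ and the normalization of the probability kernel $K_A(\cdot|\Lambda,x)$, gives total mass $1$. Measurability in $\Lambda$ follows from the standard fact that integrating a jointly measurable bounded function against a Markov kernel yields a measurable function. The same argument applies to $\tilde P_B$. Crucially, $\tilde P_A$ depends only on $(a,x,\Lambda)$ and $\tilde P_B$ only on $(b,y,\Lambda)$, because neither kernel nor response function involves the distant setting or the distant variable.

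Substituting these into Eq.~\eqref{eq:joint} yields
\begin{equation}
 P(a,b|x,y)=\int\rho_{\rm pre}(d\Lambda)\,\tilde P_A(a|x,\Lambda)\,\tilde P_B(b|y,\Lambda),
\end{equation}
with $\rho_{\rm pre}$ independent of $(x,y)$. This is exactly Bell's static local form with hidden variable $\lambda=\Lambda$, which establishes the claimed equivalence in one direction. The converse is immediate: a static model is the special case in which $u_A,u_B$ are trivial.

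For the CHSH bound, I restrict to $a,b\in\{\pm1\}$ and define the local expectations $A_x(\Lambda)=\sum_a a\,\tilde P_A(a|x,\Lambda)$ and $B_y(\Lambda)=\sum_b b\,\tilde P_B(b|y,\Lambda)$, both lying in $[-1,1]$. Then
\begin{equation}
 S=\int\rho_{\rm pre}(d\Lambda)\,\big[A_0(B_0+B_1)+A_1(B_0-B_1)\big].
\end{equation}
Pointwise in $\Lambda$, the integrand is bounded in absolute value by $|B_0+B_1|+|B_0-B_1|=2\max(|B_0|,|B_1|)\le 2$. Integrating against the probability measure $\rho_{\rm pre}$ then gives $|S|\le2$.

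The main obstacle is the measure-theoretic bookkeeping rather than any conceptual step. One point is that Eq.~\eqref{eq:joint} is only meaningful once Fubini/Tonelli applies to the product kernel $K_A\otimes K_B$, which it does since all integrands are bounded and nonnegative. The other is measurability of $\tilde P_A$ and $\tilde P_B$ in $\Lambda$. Beyond that, the reduction is the observation that integrating out local noise preserves locality.
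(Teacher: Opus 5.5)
Your proposal is correct and matches the paper's proof: your $\tilde P_A,\tilde P_B$ are the paper's $Q_A,Q_B$ from Eqs.~\eqref{eq:qa}--\eqref{eq:qb}, and you check their normalization and substitute them into Eq.~\eqref{eq:joint} to get the static form \eqref{eq:bellform}. The differences are additions rather than a different route: you spell out the CHSH bound via the pointwise estimate $|B_0+B_1|+|B_0-B_1|\le 2$ where the paper only cites the usual convexity argument, you note the trivial converse direction, and you flag the measurability and Tonelli bookkeeping that the paper treats in its Supplemental Material.
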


\begin{proof}
Define effective response functions
\begin{align}
 Q_A(a|x,\Lambda)
   &:= \int K_A(du_A|\Lambda,x)P_A(a|x,\Lambda,u_A), \label{eq:qa}\\
 Q_B(b|y,\Lambda)
   &:= \int K_B(du_B|\Lambda,y)P_B(b|y,\Lambda,u_B). \label{eq:qb}
\end{align}
Because $K_A$ and $K_B$ are normalized probability kernels and $P_A,P_B$ are
normalized response functions, $Q_A$ and $Q_B$ are valid conditional
probabilities. Substituting Eqs.~\eqref{eq:qa} and \eqref{eq:qb} into
Eq.~\eqref{eq:joint} yields
\begin{equation}
  P(a,b|x,y)
  =\int \rho_{\rm pre}(d\Lambda)
       Q_A(a|x,\Lambda)Q_B(b|y,\Lambda).
  \label{eq:bellform}
\end{equation}
Equation \eqref{eq:bellform} is precisely the static Bell-local form. The CHSH
bound follows by the usual convexity argument for local probabilistic response
functions.
\end{proof}

The proof contains the central physical point. Local measurement dynamics do
not disappear; they are absorbed into the effective response functions
$Q_A$ and $Q_B$. Since $\rho_{\rm pre}$ is independent of $(x,y)$, the
absorbed model remains Bell local. Local stochasticity, local chaos, local
memory, and local measurement disturbance therefore cannot generate Bell
nonlocality.

\emph{What about a disturbed shared variable?---}
A frequent objection is that the measurement could disturb a variable that is
``global'' or shared. The distinction is simple. If locality is maintained,
then a measurement in Alice's wing can only create a local post-measurement
representative of that variable, included in $u_A$; Bob's wing analogously
creates a representative included in $u_B$. These local representatives are
already covered by Eq.~\eqref{eq:odl} and are absorbed into $Q_A$ and $Q_B$.

By contrast, suppose there is a single indivisible post-measurement variable
$G'$ that is updated by Alice's setting and is also available to Bob's response
function. Then Bob's response has the form
$P_B(b|y,x,\Lambda)$, or equivalently depends on a variable carrying
Alice's setting. This is not a local dynamical refinement of Bell's model; it
is an ontic nonlocal dependence. If Bob is completely insensitive to the
Alice-dependent part of $G'$, that part is operationally redundant for Bob and
can be absorbed into Alice's local variable. Thus a disturbed shared variable
either reduces to the local-copy case or violates ontic dynamical locality.

\emph{Trichotomy.---}
The reduction identifies the precise ways to evade the theorem. A model that
reproduces quantum Bell correlations must give up at least one of the following:

\begin{enumerate}
\item \emph{Ontic dynamical locality.} The distant setting, distant
post-measurement variable, or distant outcome-relevant structure enters a
local response at the hidden-variable level. Bohmian mechanics belongs here:
although it is operationally no-signaling in quantum equilibrium, the guiding
wavefunction produces nonlocal dependence in the underlying dynamics
\cite{Valentini1991,Norsen2009}.
\item \emph{Measurement independence.} The pre-measurement distribution is
contextual, $\rho_{\rm pre}(d\Lambda|x,y)\neq\rho_{\rm pre}(d\Lambda)$. This
includes superdeterministic models and any protocol in which a background
field is allowed to equilibrate with the future measurement settings before
statistics are sampled.
\item \emph{A classical hidden-variable ontology.} Quantum probability theory
uses noncommutative operator algebras rather than a single underlying
Kolmogorov probability space carrying pre-existing values for all relevant
measurement contexts. Relational or other holistic ontologies can also reject
the classical hidden-variable premise rather than violating the theorem's
dynamical assumptions \cite{Rovelli1996}.
\end{enumerate}

This trichotomy is narrower and cleaner than the phrase ``local realism.'' The
mathematics does not require a philosophical realism premise by name; it
requires an ontic state space, measurement independence, and the local
factorization in Eq.~\eqref{eq:odl}.

\emph{Diagnostic applications.---}
The theorem provides a diagnostic for proposed classical dynamical analogues.
If a model has only local transition kernels conditioned on a
setting-independent pre-measurement state, it is already Bell local by
Eq.~\eqref{eq:bellform}. Historical temporal-disturbance and clock-variable
constructions therefore cannot evade Bell inequalities without slipping in a
failure of one of the assumptions \cite{Gill2002,Nijhoff2008}. Similarly,
hydrodynamic pilot-wave analogues can reproduce selected quantum-like
phenomena, but Bell-nonlocal statistics require either nonlocal effective
dynamics or a setting-dependent preparation of the relevant field state
\cite{Papatryfonos2022}.

The same point applies to device-independent protocols. A black box may contain
arbitrary classical memory and complicated local algorithms. Nevertheless, if
its internal dynamics satisfy Eq.~\eqref{eq:odl} and its pre-measurement state
is independent of the settings, it cannot produce a Bell violation. This is
not a full security proof for a complete experimental implementation; it rules
out a specific class of classical local dynamical spoofing mechanisms.

\emph{Conclusion.---}
Local dynamical complexity does not enlarge the set of Bell-local
correlations. Once the complete pre-measurement ontic state is conditioned on,
any local transition dynamics can be absorbed into static effective response
functions. A successful hidden-variable account of quantum Bell correlations
must therefore introduce ontic nonlocal dependence, measurement dependence, or
depart from a classical hidden-variable ontology. Dynamics alone is not an
escape from Bell locality.

\clearpage
\onecolumngrid
\setcounter{equation}{0}
\renewcommand{\theequation}{S\arabic{equation}}
\begin{center}
{\large\bf Supplemental Material}\\[0.5em]
{\bf Ontic Dynamical Locality Reduces to Bell Locality}
\end{center}

\appendix

\section{Measure-theoretic form of the reduction}

Let $(\Omega,\mathcal{F})$ be the measurable space of pre-measurement ontic
states, and let $(U_A,\mathcal{U}_A)$ and $(U_B,\mathcal{U}_B)$ be the
measurable spaces of local post-measurement variables. Assume these are
standard Borel spaces, so that regular conditional probabilities and stochastic
kernels are well behaved \cite{Bogachev2007,Kechris1995}. For each setting
$x$, $K_A(\cdot|\Lambda,x)$ is a probability kernel from $\Omega$ to $U_A$;
similarly $K_B$ is a probability kernel from $\Omega$ to $U_B$.

For bounded response functions $P_A$ and $P_B$, Tonelli's theorem justifies the
exchange and grouping of integrals:
\begin{align}
 P(a,b|x,y)
  ={}&\int_{\Omega}\rho_{\rm pre}(d\Lambda)
      \int_{U_A}K_A(du_A|\Lambda,x)
      \int_{U_B}K_B(du_B|\Lambda,y) \nonumber\\
    &\times P_A(a|x,\Lambda,u_A)P_B(b|y,\Lambda,u_B) \nonumber\\
  ={}&\int_{\Omega}\rho_{\rm pre}(d\Lambda)
      Q_A(a|x,\Lambda)Q_B(b|y,\Lambda).
\end{align}
Normalization follows immediately:
\begin{align}
 \sum_a Q_A(a|x,\Lambda)
   &=\int K_A(du_A|\Lambda,x)
       \sum_a P_A(a|x,\Lambda,u_A) \nonumber\\
   &=1,
\end{align}
and likewise for $Q_B$.

\section{Dynamic shared variables}

Suppose $\Lambda=(\Lambda_A,\Lambda_B,\Lambda_g)$, where $\Lambda_g$ is a
pre-measurement shared component. A local measurement may produce
$u_A=(\lambda'_A,\gamma_A)$ and $u_B=(\lambda'_B,\gamma_B)$, where
$\gamma_A$ and $\gamma_B$ are locally updated representatives of the shared
component. The most general local measurement stage is then
\begin{align}
 P(a,b|x,y)
  ={}&\int \rho_{\rm pre}(d\Lambda)
      \int K_A(d\lambda'_A,d\gamma_A|\Lambda,x)
      P_A(a|x,\Lambda,\lambda'_A,\gamma_A) \nonumber\\
    &\times
      \int K_B(d\lambda'_B,d\gamma_B|\Lambda,y)
      P_B(b|y,\Lambda,\lambda'_B,\gamma_B).
\end{align}
This is only a relabeling of Eq.~\eqref{eq:joint}; it reduces to the Bell form
by the same construction.

If instead a single post-measurement variable $\Gamma'$ is jointly available to
both wings and changes with Alice's setting, then Bob's response is not of the
form $P_B(b|y,\Lambda,u_B)$. Either Bob's outcome probabilities depend on the
Alice-dependent part of $\Gamma'$, which violates ontic dynamical locality, or
they do not, in which case that part belongs to Alice's local effective
variable for the purpose of all observable probabilities.

\section{General perturbations of a shared component}

This section spells out the reduction when a measurement is said to perturb a
shared component. The point is not that shared components cannot be physically
disturbed. The point is that, under ontic dynamical locality, the
outcome-relevant part of any such disturbance either remains local to the wing
where it is generated or else introduces a nonlocal dependence.

Let the pre-measurement ontic state be
$\Lambda=(\Lambda_A,\Lambda_B,\Lambda_g)$, and suppose Alice's apparatus
generates a variable
\begin{equation}
  u_A=(\lambda'_A,\gamma_A)
\end{equation}
from a kernel $K_A(d\lambda'_A,d\gamma_A|\Lambda,x)$, while Bob generates
\begin{equation}
  u_B=(\lambda'_B,\gamma_B)
\end{equation}
from $K_B(d\lambda'_B,d\gamma_B|\Lambda,y)$. Here $\gamma_A$ and $\gamma_B$
may be interpreted as locally perturbed representatives of the same initial
shared component $\Lambda_g$. Ontic dynamical locality requires the joint
measurement-stage kernel to factorize:
\begin{align}
 &\Pr(a,b,d\lambda'_A,d\gamma_A,d\lambda'_B,d\gamma_B|\Lambda,x,y) \nonumber\\
 &\quad =
 K_A(d\lambda'_A,d\gamma_A|\Lambda,x)
 K_B(d\lambda'_B,d\gamma_B|\Lambda,y) \nonumber\\
 &\qquad \times
 P_A(a|x,\Lambda,\lambda'_A,\gamma_A)
 P_B(b|y,\Lambda,\lambda'_B,\gamma_B).
 \label{eq:global-copy-factor}
\end{align}
Equation \eqref{eq:global-copy-factor} is already in the form covered by the
main theorem, with enlarged local variables
$u_A=(\lambda'_A,\gamma_A)$ and $u_B=(\lambda'_B,\gamma_B)$. Therefore
\begin{align}
 Q_A(a|x,\Lambda)
 &=\int K_A(d\lambda'_A,d\gamma_A|\Lambda,x)
   P_A(a|x,\Lambda,\lambda'_A,\gamma_A), \\
 Q_B(b|y,\Lambda)
 &=\int K_B(d\lambda'_B,d\gamma_B|\Lambda,y)
   P_B(b|y,\Lambda,\lambda'_B,\gamma_B),
\end{align}
and the observable distribution is again
\begin{equation}
 P(a,b|x,y)=\int \rho_{\rm pre}(d\Lambda)
 Q_A(a|x,\Lambda)Q_B(b|y,\Lambda).
\end{equation}

Now consider the stronger claim that Alice's setting produces a single
post-measurement shared object $\Gamma'_x$ that is also available to Bob. Then
Bob's response must be written as
\begin{equation}
  P_B(b|y,\Lambda,\Gamma'_x).
\end{equation}
There are only two cases.

First, if for some $b,y,\Lambda$ the value of $P_B(b|y,\Lambda,\Gamma'_x)$
changes with $x$, Bob's response function depends on Alice's distant setting
at the ontic level. This violates Eq.~\eqref{eq:odl}. The model may still be
operationally no-signaling after averaging, but it is not dynamically Bell
local.

Second, if
\begin{equation}
  P_B(b|y,\Lambda,\Gamma'_x)=P_B(b|y,\Lambda)
\end{equation}
for all $b,y,\Lambda$ and all locally generated values of $\Gamma'_x$, then the
Alice-dependent part of $\Gamma'_x$ is irrelevant to Bob's outcome statistics.
It can be absorbed into Alice's local variable without changing any observable
probability. In this case the model reduces to the local-representative form
above and hence to the static Bell form.

Thus a dynamically perturbed shared component is not a fourth escape route. It
is either a pair of local representatives, which is reducible, or a single
distantly accessible object carrying the remote setting, which violates ontic
dynamical locality.

\section{Relation to Bohmian mechanics}

Bohmian mechanics is a useful boundary case because it reproduces quantum
correlations while preserving operational no-signaling in quantum equilibrium.
It therefore illustrates why the theorem is formulated in terms of ontic
dynamical locality rather than merely observed no-signaling.

For two particles, the Bohmian ontic state is usually written as
\begin{equation}
  \lambda=(\Psi,q_A,q_B),
\end{equation}
where $\Psi(q_A,q_B,t)$ is the wavefunction on configuration space and
$(q_A,q_B)$ are the actual particle positions. The velocity field for Bob's
particle is determined by the full configuration-space wavefunction, for
example
\begin{equation}
  \dot q_B=\frac{\hbar}{m_B}
  \operatorname{Im}
  \frac{\nabla_B\Psi(q_A,q_B,t)}{\Psi(q_A,q_B,t)}.
\end{equation}
For an entangled state, a change in Alice's measurement context can change the
effective global wavefunction guiding the joint configuration. Conditional on
the full ontic state, Bob's later trajectory and hence his outcome-relevant
dynamics can depend on Alice's distant setting. In the notation of the main
text, the response is not generally of the form
$P_B(b|y,\Lambda,u_B)$; it has the ontic dependence
\begin{equation}
  P_B(b|y,x,\Lambda)\neq P_B(b|y,\Lambda).
\end{equation}
This is a violation of ontic dynamical locality, not a counterexample to the
reduction theorem.

The reason this does not normally enable controllable superluminal signaling is
the quantum equilibrium condition
\begin{equation}
  \rho(q_A,q_B)=|\Psi(q_A,q_B)|^2.
\end{equation}
After averaging over the equilibrium distribution, Bob's marginal probabilities
agree with the quantum no-signaling theorem. In density-matrix language,
Alice's local choice does not change Bob's reduced state in a way that permits
signaling, so
\begin{equation}
  P_{\rm macro}(b|x,y)=P_{\rm macro}(b|y).
\end{equation}
Valentini's nonequilibrium analysis emphasizes precisely this distinction:
outside quantum equilibrium, the hidden nonlocal dependence need not be
statistically masked \cite{Valentini1991}. Therefore Bohmian mechanics falls
under the first branch of the trichotomy: it retains a hidden-variable ontology
and measurement independence, but it violates ontic dynamical locality.

\end{document}